\newtheorem{theorem}{Theorem}
\newtheorem{lemma}{Lemma}
\newtheorem{corollary}{Corollary}
\newtheorem{definition}{Definition}
\title{New results for the detection of bicliques} 
\author{
    George Manoussakis* \\
    \small *LI-PARAD, Université de Versailles Saint-Quentin-En-Yvelines, Paris Saclay
}
\date{}
\begin{document}

\maketitle

\begin{abstract}

Building on existing algorithms and results, we offer new insights and algorithms for various problems related to detecting maximal and maximum bicliques. Most of these results focus on graphs with small maximum degree, providing improved complexities when this parameter is constant; a common characteristic in real-world graphs.

\end{abstract}

\section{Introduction}

Identifying all bicliques in a graph has been a notable area of research because of its varied applications. This problem is significant in theoretical computer science and also contributes to fields such as data mining, bioinformatics, and social network analysis, where comprehending the complex relationships between entities is essential.

For instance, in computational biology, enumerating bicliques has various applications, particularly in biclustering—a data mining technique that simultaneously clusters rows and columns of a matrix to uncover submatrices of related data. This process is closely linked to biclique enumeration in graph theory, where each biclique corresponds to a bicluster, representing strongly connected subsets of entities across two dimensions, see \cite{madeira}, \cite{pontes} for instance.

The problem also finds application in community identification, which aims to uncover clusters of nodes within a network that are more tightly connected to each other than to the rest of the network, revealing structures such as social groups or functional modules. This task is particularly complex in dynamic networks where relationships evolve over time. For instance, in \cite{tanti}, the authors address this challenge by using bicliques to track and identify changing communities, thereby providing insights into the evolving nature of social interactions.

\section{State of the art}

An algorithm is considered to have polynomial time delay if the time elapsed between the output of two bicliques is at most polynomial in the size of the input. The complexity of these algorithms is typically described in terms of the time delay. We refer to algorithms that exhibit complexity polynomial in the parameters of the graph and the output size, but are not classified as polynomial time delay, as polynomial \textit{output sensitive} algorithms. In the scope of this paper we will simply call such algorithms output sensitive.

There are a few contributions for the enumeration of maximal  non-induced bicliques. Eppstein~\cite{7} proposes a fixed parameter tractable algorithm running in time $\mathcal{O}(a^32^{2a}n)$ where $a$ is the arboricity of the input graph (the arboricity is a graph parameter within a constant factor of the degeneracy). Eppstein also proves that the maximum number of maximal non-induced bicliques is lower bounded by $\Omega(2^{a}n/a)$ and upper bounded by $\mathcal{O}(2^{2a}n)$. Alexe \textit{et al.}~\cite{8} propose various solutions for the problem. They propose an output sensitive algorithm with complexity $\mathcal{O}(n^2\alpha^2)$ with $n$ the order of the graph and $\alpha$ the number of maximal non induced bicliques. They also give a polynomial time delay algorithm with complexity $\mathcal{O}(n^3)$.  Additionnaly, Dias \textit{et al.}~\cite{14} propose an algorithm with complexity $\mathcal{O}(n^3)$.
 
Results also exist for the enumeration of maximal bicliques  when the input graph is subject to certain restrictions. For instance when it is bipartite, Makino and Uno~\cite{26} propose a polynomial time delay algorithm with delay $\mathcal{O}(\Delta^2)$ (with $\Delta$ the maximum degree of the graph).  Damaschke~\cite{16} also provides polynomial time delay algorithms for bipartite graphs with special degree distributions. Note that when the input graph is bipartite, all bicliques are necessarily induced.

Concerning the problem of detecting and counting maximum bicliques, there have been several contributions. For example, Gaspers \textit{et al.}~\cite{gaspers} present an $\mathcal{O}(1.2109^n)$ algorithm for finding a maximum biclique, as well as an $\mathcal{O}(1.2377^n)$ algorithm for counting all maximum bicliques.

\section{Our contributions}

In Section~\ref{fst}, we demonstrate that by leveraging existing combinatorial arguments and incorporating a few straightforward observations, we can develop an $\mathcal{O}(\Delta^2)$ output-sensitive algorithm for enumerating all maximal non-induced bicliques in general graphs. To the best of our knowledge, the best-known algorithm to date has complexity that is polynomial in the order of the graph and the output size. In Section~\ref{scnd}, we focus on detecting and counting maximum bicliques. We show that a maximum biclique in a graph $G$ can be found in time $\mathcal{O}(nc^{\Delta} \Delta^{\mathcal{O}(1)})$, given an algorithm to find a maximum biclique in time $\mathcal{O}(c^n n^{\mathcal{O}(1)})$. In practice, by applying a result from Gaspers \textit{et al.}~\cite{gaspers}, which shows that finding a maximum biclique is equivalent to finding a maximum independent set, and using, for example, the $\mathcal{O}(1.2109^n)$ algorithm of Robson~\cite{robson}, we obtain an $\mathcal{O}(n \cdot 1.2109^{\Delta})$ algorithm for this problem. This is stated in Theorem~\ref{algcom}. In a similar fashion, and using results from~\cite{gaspers,robson,sat}, we demonstrate that all maximum bicliques can be counted in time $\mathcal{O}(n\cdot 1.2377^{\Delta})$, and extend these results to bicliques of size $k$. These results are stated in Theorem~\ref{countt} and Corollary~\ref{countcor}.

\section{Results}
\subsection{Notations}

 We consider graphs of the form $G=(V,E)$ which are simple, undirected, with $n$ vertices and $m$ edges. If $X\subseteq V$, the subgraph of $G$ induced by $X$ is denoted by $G[X]$. If $X\subset V$, then $X$ is a proper subgraph of $G$.  When not clear from the context, the vertex set of $G$ will be denoted by $V(G)$. The set $N(x)$ is called the \textit{open neighborhood} of the vertex $x$ and consists of the vertices adjacent to $x$ in $G$. Given an ordering $\sigma= v_{1},...,v_{n}$ of the vertices of $G$, set $V_{i}$ consists of the vertices following $v_{i}$ including itself in this ordering, that is, the set $\{v_{i},v_{i+1},...,v_{n}\}$. In the ordering $\sigma$, the \textit{rank} of $v_i$, denoted by $\sigma(v_i)$, is its position in the ordering ($i$ in that case).  By $[n]$, we denote the set of integers $\{1,2,...,n\}$.  The distance between two vertices $u$, $v$ is the length of the shortest path from
$u$ to $v$. Let $N_{i}^k(v)=V_i \cap N^{k}(v)$ where $N^k(v)$ is the set of vertices at distance $k$ from $v$ (we consider in the paper that $N^{1}_i(v_i)$ and $N_{i}(v_i)$ are equal).

\subsection{Enumeration of non induced maximal bicliques}
\label{fst}
In this section we are interested in the problem of enumerating non-induced maximal bicliques. As discussed in the state of the art section, there are a few results for this question. It is important to notice that the problem can be related to the problem of enumerating induced bicliques in bipartite graphs. To see that, we introduce a construction that has been referenced in several papers in various forms. This construction links the problem of enumerating maximal non-induced bicliques in general graphs to the task of enumerating induced bicliques in bipartite graphs. The one we present has been introduced, to the best of our knowledge, in Dias \textit{et al.}~\cite{14}. Similar constructions have been proposed, for instance in for general graphs and in for  for oriented graphs. The construction of  \textit{et al.}~\cite{14} that we consider in this section is as follows.

\begin{definition}
\label{fdef}
Let $G=(V,E)$ be a graph. We construct a new bipartite graph $G_1=(V_1,E_1)$. Let $V'$ be a duplicate of the vertices of $V$. If $x\in V$ then let $x'\in V'$ be the corresponding vertex. The vertex set of $G_1$ is set $V\cup V'$. Its edge set is as follows. If $(x,y)\in E$ then $(x,y')\in E_1$ and $(x',y)\in E_1$. 
\end{definition}

If $G$ is the general graph and $G_1$ the graph transformed according to Definition~\ref{fdef}, the authors prove that there is a two-to-one correspondence between the induced maximal bicliques of $G_1$ and the non-induced maximal bicliques of $G$. Each non-induced biclique $B = X \cup Y$ of $G$ appears in $G_1$ as $B' = (V \cap X) \cup (V' \cap Y )$ and as $B'' = (V' \cap X) \cup (V \cap Y )$.

In the same paper~\cite{14}, the authors develop a polynomial delay algorithm for enumerating maximal induced bicliques in bipartite graphs, achieving a complexity of $\alpha \mathcal{O}(n^3)$, along with polynomial space complexity. They further assert that the same time and space complexities apply to general graphs, utilizing the construction mentioned in the previous paragraph.

However, no formal proof is provided for this claim. We identify several issues with this approach. First, it is unclear why the delay remains polynomial. It is possible that between the output of two bicliques in the original graph, numerous duplicates are generated, which would invalidate the polynomial time delay. Additionally, the authors do not explain how these duplicates are detected and therefore how the polynomial space complexity is maintained. If the solutions need to be stored for duplicate detection, the space complexity could become exponential in the worst case. Moreover, if their assertion that an algorithm for maximal biclique enumeration in bipartite graphs fully translates to the problem of maximal non-induced biclique enumeration in general graphs is valid, then applying the polynomial time delay algorithm by Makino and Uno~\cite{26} would result in an algorithm with a delay complexity of $\mathcal{O}(\Delta^2)$, which is an improvement over their $\mathcal{O}(n^3)$ complexity. This can been seen easily by observing that the transformed graph $G_1$ has the same order and maximum degree than $G$, asymptotically. Therefore, we show in the remainder of this section that it is possible to have an algorithm with running time $\mathcal{O}(\Delta^2)$ (per clique) and polynomial space. However no polynomial delay is proved. We will use very simple arguments and constructions.

Let $G$ be a general graph and $G_1$ its transformation, as per Definition~\ref{fdef}. First, using results from Dias \textit{et al.}~\cite{14}, we know that there is a two-to-one correspondence between the non induced maximal bicliques of $G$ and the (induced) maximal bicliques of $G_1$. Moreover $G_1$ is of order $\mathcal{O}(n)$ and has maximum degree $\mathcal{O}(\Delta)$. Therefore applying the results of Makino and Uno~\cite{26} to the bipartite graph $G_1$ gives an algorithm to find all non induced maximal bicliques of a graph $G$ in total time $\alpha \mathcal{O}(\Delta^2)$. However it is not clear that the delay and space remain polynomial, as discussed earlier. Therefore we show now that with an additional small procedure it is possible to get a polynomial space complexity.

 Given a graph $G=(V,E)$, let $G_1$ be defined as per Definition~\ref{fdef}. Assume $V(G_1)= V\cup V'$. We consider the following function. Let $f:(V\cup V')\xrightarrow{} V$ such that $f(x)=x$ if $x'\in V$ and $f(x')=x$ if $x\in V'$. Assume that when the algorithm of Makino and Uno~\cite{26} is running on graph $G_1$ it considers as some point some maximal biclique $K$. Assume that we are given a global ordering $\sigma$ on the vertices of $G$. Consider the set of vertices $S=f(K)$. Let $x$ be the vertex of $S$ with smallest rank in $\sigma$. If $f^{-1}(x) \in V$ then output $K$. Note that $f^{-1}(x)$ either belongs to $V$ or $V'$. This procedure will correctly output only one of the two copies of each non-induced biclique of $G$.

In summary, given an arbitrary graph $G$, applying the algorithm of Makino and Uno~\cite{26} to the graph $G_1$ described in Definition~\ref{fdef}, along with the simple procedure outlined in the previous paragraph, yields an algorithm that finds all non-induced maximal bicliques of $G$, without duplication and using polynomial space. However, it is unclear whether the delay remains polynomial. This algorithm runs in total time $\alpha \mathcal{O}(\Delta^2)$, with $\alpha$ the number of maximal non-induced bicliques of the input graph.

\subsection{Counting and maximum bicliques}

\label{scnd}
In this section, we focus on the detection of maximum bicliques. A biclique \(B\) in a graph is considered maximum if it is a largest (in terms of vertices) maximal biclique. Specifically, we are interested in detecting one such biclique, as outlined in Algorithm~\ref{algo}, and in counting all maximum bicliques, as discussed in Theorem~\ref{countt}.
First, given a graph \(G\), we define a family of induced subgraphs in the following definition. Then, in Corollary~\ref{main}, we demonstrate how maximal bicliques are related to these subgraphs.

\begin{definition}
\label{graph}
Let $G=(V,E)$ be a graph and let  $v_1, \dots, v_n $ be an ordering of its vertices. Graph $G_i,i\in [n]$, is the subgraph of $G$ induced on the vertex set $\{\{v_i\}\cup N_i(v_i)\cup N_{i}^2(v_i)\}$.
\end{definition}

\begin{definition}
    Let $G=(V,E)$ be a graph and let  $v_1, \dots, v_n $ be an ordering of its vertices. Let $G_i,i\in [n]$ be the graphs of Definition~\ref{graph}. Graphs $G_{i,k}$ with $ i\in [n], k\in[d(v_i)]$ are the subgraphs induced on the vertex set $\{\{v_i\}\cup N_i(v_i)\cup (N_i(x)\cap N_{i}^2(v_i))\}$ with $x\in N(v_i)$.
\end{definition}

\begin{lemma}
\label{vert}
 Let   $G_i,i\in [n]$ be an induced subgraph of $G$. Let $B$ be a maximum biclique of $G_i$. Then vertex $v_i$ is included in $V[B]$.
\end{lemma}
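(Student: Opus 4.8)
The plan is to argue by contradiction: assume $B = X \cup Y$ is a maximum biclique of $G_i$ with $v_i \notin V[B]$, and derive from it a strictly larger biclique of $G_i$. The starting observation is purely about the vertex set provided by Definition~\ref{graph}: the vertices of $G_i$ are $v_i$ itself, its later neighbours $N_i(v_i)$, and the later vertices at distance two $N_i^2(v_i)$. Since $v_i \notin V[B]$ we have $X,Y \subseteq N_i(v_i)\cup N_i^2(v_i)$, and $v_i$ is adjacent in $G_i$ to precisely the vertices of $N_i(v_i)$, because the vertices of $N_i^2(v_i)$ lie at distance two from $v_i$ in $G$.

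The heart of the proof is the claim that at least one side of $B$, say $Y$, is contained entirely in $N_i(v_i)$. Once this is granted, the contradiction is immediate: $v_i$ is then adjacent to every vertex of $Y$, and every vertex of $X$ is adjacent to every vertex of $Y$ as well, so $X \cup (Y\cup\{v_i\})$ is again a biclique of $G_i$, now with $|V[B]|+1$ vertices, contradicting the maximality of $B$. Hence $v_i$ must belong to $V[B]$, and the whole argument reduces to establishing the containment claim.

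To prove that claim I would again argue by contradiction, assuming there exist $p\in X\cap N_i^2(v_i)$ and $q\in Y\cap N_i^2(v_i)$ (so $p$ and $q$ are adjacent, being on opposite sides of $B$), and then exploit the distance-two structure of $N_i^2(v_i)$: the vertex $p$ has a neighbour $u$ among the neighbours of $v_i$, and playing $u$ against $q$ and the remaining vertices of $B$ should let one swap a vertex of $B$ for one closer to $v_i$, or simply exhibit a larger biclique; in either case this contradicts the maximality of $B$. This last step, turning ``both sides of a maximum biclique reach into $N_i^2(v_i)$'' into a contradiction, is the only genuinely delicate point; everything else is bookkeeping on $V(G_i)$, and the degenerate cases (one part of $B$ empty, or $G_i$ edgeless) are checked directly.
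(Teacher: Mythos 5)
Your reduction of the lemma to the claim that at least one side of $B$ lies entirely inside $N_i(v_i)$ is exactly where the attempt breaks down, and the problem is not a missing computation but that the claim itself fails. Nothing in Definition~\ref{graph} prevents $N_i^2(v_i)$ from containing many edges, so both sides of a maximum biclique of $G_i$ can meet $N_i^2(v_i)$, and no exchange argument can then rescue the situation. Concretely, let $v_i$ have exactly two later neighbours $u_1,u_2$, let $N_i^2(v_i)=A\cup B$ with $|A|=|B|=t$, all edges present between $A$ and $B$, with $u_1$ joined to all of $A$ and $u_2$ to all of $B$. Every vertex of $A\cup B$ is indeed at distance two from $v_i$, so this is a legitimate $G_i$. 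Here $A\cup B$ is a biclique on $2t$ vertices avoiding $v_i$, whereas any biclique containing $v_i$ has one side inside $N(v_i)=\{u_1,u_2\}$ and therefore has at most roughly $t+3$ vertices (in the induced reading; an analogous count holds in the non-induced reading, where $(A,\,B\cup\{u_1\})$ already beats every biclique through $v_i$). For large $t$ there is simply no larger biclique containing $v_i$, so the swap you sketch ("trade a vertex of $B$ for one closer to $v_i$, or exhibit a larger biclique") cannot be carried out; your own flag that this is "the only genuinely delicate point" is accurate, and it is precisely the point that cannot be established. This configuration also shows that the statement itself, as written, is problematic, at least in the induced-biclique setting that matches the reduction to maximum independent set used later in the paper.

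Two further slips, independent of the main gap: if $Y\subseteq N_i(v_i)$ then $v_i$ is adjacent to all of $Y$ and must be added to $X$, the side opposite $Y$, not to $Y$ as in your expression $X\cup(Y\cup\{v_i\})$; and in the induced setting adding $v_i$ to $X$ additionally requires $X\cap N_i(v_i)=\emptyset$, a condition your argument never secures. For comparison, the paper's proof of Lemma~\ref{vert} is a one-line appeal to the definition of $G_i$ (that $v_i$ is adjacent to every vertex of $N_i(v_i)$ and to no vertex of $N_i^2(v_i)$) and never attempts the extension/exchange step; your attempt has the merit of making explicit exactly what such an appeal would have to justify, but that step fails on the example above.
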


\begin{proof}
The proof is by definition since we have that $N_i(v_i)\subseteq N(v_i)$ and $N_{i}^2(v_i)\cap N(v_i)=\emptyset$. 
\end{proof}

\begin{lemma}
\label{atl1}
Let $G$ be a graph and $\sigma $ an ordering of its vertices. Let $B$ be a maximum biclique  of $G$. Biclique $B$ is an induced subgraph of graph $G_v$ where $v$ is the vertex of $B$ with lowest ranking in $\sigma$.
\end{lemma}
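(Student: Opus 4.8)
The plan is to show that every vertex of $B$ lies in the vertex set $\{v\}\cup N_v(v)\cup N_v^2(v)$ that induces $G_v$, where $v$ is the lowest-ranked vertex of $B$ in $\sigma$. Write $B = X \cup Y$ with $X$, $Y$ the two sides of the biclique, and assume without loss of generality that $v \in X$. First I would observe that since $v$ has the lowest rank among all vertices of $B$, every vertex $w \in B$ satisfies $w \in V_v$ (i.e.\ $\sigma(w) \ge \sigma(v)$), so it suffices to argue that every $w \in B$ is at distance at most $2$ from $v$ in $G$ (hence, being in $V_v$, it lies in $N_v(v)$ or $N_v^2(v)$, or equals $v$).

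The key case split is on which side of the biclique $w$ belongs to. If $w \in Y$, then since $B$ is a biclique and $v \in X$, the edge $vw$ is present, so $w \in N(v)$; combined with $w \in V_v$ this gives $w \in N_v(v)$. If instead $w \in X$ and $w \neq v$, pick any vertex $y \in Y$ (the set $Y$ is nonempty because $B$ is a biclique with at least one vertex on each side; if one worries about degenerate bicliques with an empty side, that case is trivial since then $B$ is a single vertex $v$). Then $vy \in E$ and $wy \in E$, so $v - y - w$ is a path of length $2$, giving $\mathrm{dist}(v,w) \le 2$, and again $w \in V_v$ forces $w \in N_v^2(v)$ unless $w \in N(v)$, in which case $w \in N_v(v)$. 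Either way $w$ lies in the inducing set of $G_v$. Since this holds for every $w \in B$, the vertex set of $B$ is contained in that of $G_v$, and because $G_v$ is an \emph{induced} subgraph of $G$, the subgraph of $G$ on $V(B)$ — which is exactly $B$ — coincides with the subgraph of $G_v$ on $V(B)$; hence $B$ is an induced subgraph of $G_v$.

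I do not expect a serious obstacle here: the argument is essentially the same distance-$\le 2$ observation already used implicitly in Definition~\ref{graph}, and the only mild subtlety is handling the convention for which vertices count as being "in $V_v$" and the edge case of a biclique with a trivial side. One should also double-check that the definition of $G_v$ uses $N_v(v) = V_v \cap N(v)$ and $N_v^2(v) = V_v \cap N^2(v)$ consistently with the notation section, so that "$w \in V_v$ and $\mathrm{dist}(v,w)\le 2$" indeed places $w$ in the inducing set; this is immediate from the stated definitions.
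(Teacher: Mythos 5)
Your proposal is correct and follows essentially the same route as the paper's proof: write $B=X\cup Y$ with $v\in X$, note $Y\subseteq N_{\sigma(v)}(v)$ and $X\setminus\{v\}\subseteq N^2_{\sigma(v)}(v)$, and conclude $V(B)\subseteq V(G_{\sigma(v)})$. You simply spell out the justifications (all of $B$ lies in $V_v$ because $v$ has lowest rank, and distance at most $2$ is witnessed by a path through any vertex of $Y$) that the paper leaves implicit, and you are in fact a bit more careful than the paper about the case where a vertex of $X$ happens to be adjacent to $v$.
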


\begin{proof}

Assume that $B=X\cup Y$. Assume without loss of generality that $v\in X$. We have that $V(Y)\subseteq N_{\sigma(v)}(v)$ and $V(X)\backslash\{v\} \subseteq  N^{2}_{\sigma(v)}(v) $. Therefore, $V(B)\subseteq V(G_{\sigma(v)})$.
\end{proof}

\begin{lemma}
\label{atm1}
Let $G$ be a graph and $v_1,\dots,v_n$ an ordering of its vertices. Let $B$ be a maximum biclique of $G$. Let $v$ be the vertex of $V(B)$ with smallest ranking in $\sigma$. Then $B$ is not an induced graph of any $G_j$, with $j\neq \sigma(v)$.
\end{lemma}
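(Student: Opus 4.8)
The plan is a proof by contradiction: suppose that for some $j\neq \sigma(v)$ the biclique $B$ is an induced subgraph of $G_j$, i.e. $V(B)\subseteq V(G_j)$, and show that this forces $j=\sigma(v)$.

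First I would extract the rank constraint hidden in $V(B)\subseteq V(G_j)$. By the definition of $G_j$, its vertex set is $\{v_j\}\cup N_j(v_j)\cup N_j^2(v_j)$, and both $N_j(v_j)=V_j\cap N(v_j)$ and $N_j^2(v_j)=V_j\cap N^2(v_j)$ are subsets of $V_j=\{v_j,\dots,v_n\}$; hence every vertex of $G_j$ has rank at least $j$. Since $v\in V(B)\subseteq V(G_j)$, this gives $\sigma(v)\ge j$. In particular, the case $j>\sigma(v)$ is immediately impossible (there $v\notin V_j\supseteq V(G_j)$), so it only remains to rule out $j<\sigma(v)$.

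For $j<\sigma(v)$, the key idea is to force $v_j$ into $B$ via Lemma~\ref{vert}. For this I first note that $B$ is not merely a biclique of the induced subgraph $G_j$ but a \emph{maximum} one: every biclique of $G_j$ is also a biclique of $G$, so none can have more vertices than $B$, and maximality of $B$ in $G$ is inherited by the subgraph $G_j$. Lemma~\ref{vert} then yields $v_j\in V(B)$. But $v$ is, by hypothesis, the vertex of $V(B)$ with smallest rank, so $\sigma(v)\le \sigma(v_j)=j$, contradicting $j<\sigma(v)$. Combining the two cases, no $j\neq\sigma(v)$ works, which is the claim; together with Lemma~\ref{atl1} it shows that $\sigma(v)$ is the unique index $i$ for which $B$ is an induced subgraph of $G_i$.

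The only step I expect to require care is the passage to Lemma~\ref{vert} — namely the remark that a maximum biclique of $G$ lying inside $G_j$ is a maximum biclique of $G_j$. A purely local argument seems hard to push through, since $V(B)\subseteq V(G_j)$ by itself says only that every vertex of $B$ lies within distance two of $v_j$, which is not in conflict with $B$ being a maximal biclique of $G$; it is the maximum‑cardinality hypothesis, channelled through Lemma~\ref{vert}, that does the real work.
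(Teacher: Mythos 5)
Your proof is correct and follows essentially the same route as the paper: the case $j>\sigma(v)$ is ruled out because every vertex of $G_j$ has rank at least $j$, and the case $j<\sigma(v)$ is ruled out by forcing $v_j$ into $B$ via Lemma~\ref{vert} and contradicting the minimality of $\sigma(v)$ among the ranks of $V(B)$. If anything you are slightly more careful than the paper, since you explicitly justify that a maximum biclique of $G$ contained in $G_j$ is a maximum biclique of $G_j$ before invoking Lemma~\ref{vert}.
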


\begin{proof}
    
 Assume by contradiction that there exists  $j\in [n]$ with $j\neq \sigma(v)$ such that $B$ is an induced subgraph of $G_j$. Assume first that $\sigma(v)<j$. By construction, vertex $v$ belongs to $V(B)$. Since we assumed $\sigma(v)<j$, then $v \notin V(G_j)$. This implies that $V(B)$ cannot be included in $V(G_j)$, which yields the contradiction. Assume now that $j<\sigma(v)$. By definition,  $B$ is a maximal biclique in $G_v$. Since $v_j\notin V(G_v)$, then $v_j \notin V(B)$. Therefore,  $B$ cannot be maximal in $G_j$, since  by Lemma~\ref{vert}, all maximal bicliques include vertex $v_j$.

\end{proof} 

\begin{corollary}
\label{main}
Let $G$ be a graph and $v_1,\dots,v_n$ an ordering of its vertices. Let $B$ be a maximum  biclique of $G$. Biclique $B$ is an induced graph of exactly one graph $G_i, i\in [n]$.
\end{corollary}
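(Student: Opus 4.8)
The plan is to combine Lemmas~\ref{atl1} and \ref{atm1} directly, since together they already pin down existence and uniqueness. First I would invoke Lemma~\ref{atl1}: letting $v$ be the vertex of $V(B)$ with smallest rank in $\sigma$, that lemma says $B$ is an induced subgraph of $G_v = G_{\sigma(v)}$, which establishes that $B$ occurs in \emph{at least one} graph of the family, namely the one indexed by $\sigma(v)$.

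Next I would invoke Lemma~\ref{atm1}, applied to the same vertex $v$: it states that $B$ is not an induced subgraph of any $G_j$ with $j \neq \sigma(v)$. Combining the two, $i = \sigma(v)$ is the unique index in $[n]$ such that $B$ is an induced subgraph of $G_i$. This is essentially a one-line deduction, so the ``proof'' is just the assembly of the two preceding lemmas.

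The only point that needs a moment's care — and what I would flag as the main (mild) obstacle — is that Lemma~\ref{atl1} and Lemma~\ref{atm1} are phrased slightly asymmetrically: Lemma~\ref{atl1} concludes ``$B$ is an induced subgraph of $G_v$'' while the second part of the proof of Lemma~\ref{atm1} uses the stronger fact that $B$ is \emph{maximal} in $G_v$ to rule out small-index $G_j$. So in writing the corollary's proof I would make sure to note that since $B$ is maximum (hence maximal) in $G$ and $V(B) \subseteq V(G_v)$, $B$ is indeed maximal in $G_v$, which is exactly the hypothesis Lemma~\ref{atm1} needs; this is immediate because any biclique of $G_v$ strictly containing $B$ would also be a biclique of $G$ strictly containing $B$, contradicting maximality. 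With that observation in place the corollary follows by quoting the two lemmas. I would keep the written proof to two or three sentences.
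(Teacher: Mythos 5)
Your proposal is correct and is exactly the argument the paper intends: the corollary is stated without a written proof precisely because it is the immediate conjunction of Lemma~\ref{atl1} (existence, for $i=\sigma(v)$ with $v$ the lowest-ranked vertex of $B$) and Lemma~\ref{atm1} (uniqueness). Your added remark that $B$, being maximum in $G$ and contained in $V(G_v)$, is maximal in $G_v$ --- which is what the proof of Lemma~\ref{atm1} actually uses --- is a sensible clarification but does not change the route.
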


\begin{algorithm}

 \KwData{A graph $G$.}
 \KwResult{A maximum biclique of $G$.}
 \BlankLine
 
Consider any ordering $\sigma$ of the vertices of $G$.\label{orde} \\ Construct the graphs $G_{i},i\in [n]$.\label{famil}\\ Initialize $S$ an empty set.\\

\For{$i=1$ \KwTo $n$}{
\label{firstF}
Compute the graphs $G_{i,k}$ for $k\in [d(v_i)]$.

\For{each graph $G_{i,k}$}

{\label{comp}
Compute a maximum biclique $B$ of $G_{i,k}$.
 \If{\label{max}$B$ is larger that the biclique in $S$}{$S\leftarrow{\emptyset}, S\leftarrow B$.}
 
 } 
}
 \Return{S.}
 
\caption{}
\label{algo}
\end{algorithm}

Algorithm~\ref{algo} is a straightforward procedure that utilizes the properties established in the previous lemmas. It searches for maximum bicliques in the subgraphs defined in Definition~\ref{graph}, keeping track of the largest maximum biclique encountered during its execution. The algorithm is presented in Figure~\ref{algo}. We then prove its correctness in Theorem~\ref{corre} and analyze its time complexity in Theorem~\ref{algcom}.

\begin{theorem}
\label{corre}
    Algorithm~\ref{algo} correctly outputs a maximum biclique of the input graph $G$.
\end{theorem}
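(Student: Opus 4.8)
The plan is to prove the theorem by establishing two inclusions and then comparing sizes. Let $t$ denote the number of vertices of a maximum biclique of $G$. \textbf{Soundness:} every biclique that Algorithm~\ref{algo} ever stores in $S$ is computed inside some graph $G_{i,k}$, and each $G_{i,k}$ is by construction an induced subgraph of $G$ (it is induced on a subset of $V(G_i)\subseteq V$); hence every biclique of $G_{i,k}$ is also a biclique of $G$ and therefore has at most $t$ vertices, so the returned set $S$ has at most $t$ vertices. \textbf{Completeness:} I will show that a maximum biclique of $G$ occurs as a subgraph of some $G_{i,k}$; this forces $S$ to attain $t$ vertices during the run, and since $|V(S)|$ is non-decreasing along the double loop (it is replaced only by a strictly larger biclique), the output has at least $t$ vertices. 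The two bounds give $|V(S)|=t$, and as $S$ is a biclique of $G$ it is a maximum biclique, as claimed.

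The heart of the argument is the completeness inclusion. Let $B=X\cup Y$ be a maximum biclique of $G$ and let $v=v_i$ be the vertex of $V(B)$ of smallest rank in $\sigma$, say $v\in X$. By Lemma~\ref{atl1} (more precisely, by the inclusions exhibited in its proof) we have $Y\subseteq N_i(v_i)$ and $X\setminus\{v_i\}\subseteq N_i^2(v_i)$, so $V(B)\subseteq V(G_i)$. To locate $B$ inside one of the refined graphs $G_{i,k}$, pick any vertex $x\in Y$; such a vertex exists under the convention that both sides of a biclique are nonempty. Since $B$ is complete bipartite between $X$ and $Y$, every vertex of $X$ is adjacent to $x$; in particular $x\in N(v_i)$, so $x$ is one of the $d(v_i)$ neighbours of $v_i$ and determines an index $k$, and $X\setminus\{v_i\}\subseteq N(x)$. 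Because $X\setminus\{v_i\}\subseteq N_i^2(v_i)\subseteq V_i$, we get $X\setminus\{v_i\}\subseteq N_i(x)\cap N_i^2(v_i)$, and combining with $Y\subseteq N_i(v_i)$ yields $V(B)\subseteq \{v_i\}\cup N_i(v_i)\cup (N_i(x)\cap N_i^2(v_i))=V(G_{i,k})$. Since $G_{i,k}$ is an induced subgraph of $G$, all edges of $B$ survive in $G_{i,k}$, so $B$ is a biclique of $G_{i,k}$; and by soundness no biclique of $G_{i,k}$ has more than $t=|V(B)|$ vertices, so $B$ is a maximum biclique of $G_{i,k}$. Hence the iteration handling the pair $(i,k)$ computes a biclique on $t$ vertices, and $S$ contains a $t$-vertex biclique from that point onward.

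I expect the genuinely new content to be only the refinement step from $G_i$ to $G_{i,k}$ via an arbitrary vertex $x$ of the opposite side; the surrounding structural facts are already packaged by Lemmas~\ref{vert} and~\ref{atl1} and Corollary~\ref{main}. The remaining points are bookkeeping rather than obstacles: one must observe that ``a maximum biclique of $G_{i,k}$'' has exactly $t$ vertices (reusing soundness and that $G_{i,k}$ is an induced subgraph of $G$), that $|V(S)|$ never decreases across the loop, and that the biclique convention guarantees $Y\neq\emptyset$ so that $x$ can be chosen. If anything is delicate it is making this last convention explicit and checking the degenerate small cases (e.g. $B=K_{1,1}$, where $X\setminus\{v_i\}=\emptyset$ and the inclusion holds trivially).
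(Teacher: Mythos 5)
Your proof is correct and follows essentially the same route as the paper: locate a maximum biclique $B$ inside some $G_i$ via Corollary~\ref{main}, refine to a $G_{i,k}$ by choosing a vertex of $B$ in $N_i(v_i)$, and observe that the algorithm retains the largest biclique found. Your write-up is in fact more careful than the paper's at two points it glosses over — the explicit verification that $X\setminus\{v_i\}\subseteq N_i(x)\cap N_i^2(v_i)$ for a chosen $x\in Y$, and the soundness direction (that every candidate stored in $S$ is a biclique of $G$ because each $G_{i,k}$ is an induced subgraph) — but these are elaborations of the same argument, not a different one.
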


\begin{proof}
    We first show that $S$ cannot be empty when it is returned by Algorithm~\ref{algo}.
Let $B$ be a maximum biclique in $G$. By Corollary~\ref{main}, there exists $i$ such that biclique $B$ is an induced subgraph of graph $G_i$. There exists a graph $G_{i,k}$, with $k\in [d(v_i)]$, such that $B$ is an induced subgraph of $G_{i,k}$. To see that, it is enough to choose a vertex $v\in N_i(v_i)$ such that $v$ belongs to the vertex set of $B$. Then, we have that $B$ is and induced subgraph of a graph $G_{i,k}$, with $k\in [d(v_i)]$. Therefore, $S$ cannot be empty when it is output.
Now we show that $S$ contains necessarily a maximum biclique of the graph when it is output. Let $B$ be the biclique in $S$. Assume by contradiction that $B$ is not maximum in $G$. Then, there exists a larger biclique $B'$ which is maximum in $G$. There exists $i\in [n]$ and $k\in[d(v_i)]$ such that $B'$ is an induced subgraph of $G_{i,k}$. Therefore, $B'$ will be considered at some point in Line~\ref{max} of Algorithm~\ref{algo}. If $B$ is already in $S$ then it will be replaced by $B'$ otherwise $B'$ will be put in $S$ and $B'$ cannot replace it, by maximality.

\end{proof}

\begin{theorem}
\label{algcom}
Given an algorithm to find a maximum biclique of a graph of order $n$ in time $\mathcal{O}(c^n n^{\mathcal{O}(1)})$, Algorithm~\ref{algo} runs in time $\mathcal{O}(nc^{\Delta} \Delta^{\mathcal{O}(1)})$.
\end{theorem}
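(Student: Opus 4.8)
The plan is to bound the cost of the three phases of Algorithm~\ref{algo} separately: (i) constructing the graphs $G_i$; (ii) for each $i$, constructing the graphs $G_{i,k}$ and running the assumed maximum-biclique subroutine on each; (iii) the $n$ comparisons in Line~\ref{max}. The key observation driving all the bounds is that each $G_{i,k}$ has order $\mathcal{O}(\Delta)$: its vertex set is $\{v_i\}\cup N_i(v_i)\cup(N_i(x)\cap N_i^2(v_i))$, and $|N_i(v_i)|\le d(v_i)\le\Delta$ while $|N_i(x)|\le d(x)\le\Delta$, so the whole set has at most $2\Delta+1$ vertices. This is the structural heart of the argument and the only place where the degree bound enters.

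First I would argue phase (ii), which dominates. For a fixed $i$, there are $d(v_i)\le\Delta$ choices of $x\in N(v_i)$, hence at most $\Delta$ graphs $G_{i,k}$; each can be built in time polynomial in $\Delta$ (scanning the neighborhoods of $v_i$ and of $x$, which are $\mathcal{O}(\Delta)$ each, and extracting induced adjacencies among $\mathcal{O}(\Delta)$ vertices costs $\mathcal{O}(\Delta^2)$). On each such graph, whose order is $\mathcal{O}(\Delta)$ by the observation above, the assumed subroutine runs in time $\mathcal{O}(c^{\Delta}\Delta^{\mathcal{O}(1)})$. So the work for a single $i$ is $\mathcal{O}(\Delta\cdot c^{\Delta}\Delta^{\mathcal{O}(1)})=\mathcal{O}(c^{\Delta}\Delta^{\mathcal{O}(1)})$, and summing over $i\in[n]$ gives $\mathcal{O}(nc^{\Delta}\Delta^{\mathcal{O}(1)})$. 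For phase (i), each $G_i$ has order $\mathcal{O}(\Delta^2)$ (it includes second neighborhoods), so building all $n$ of them costs $\mathcal{O}(n\Delta^{\mathcal{O}(1)})$, which is absorbed. Phase (iii) is $n$ comparisons of bicliques of size $\mathcal{O}(\Delta)$, i.e. $\mathcal{O}(n\Delta)$, also absorbed. Adding the three bounds yields the claimed $\mathcal{O}(nc^{\Delta}\Delta^{\mathcal{O}(1)})$.

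The main obstacle — really the only subtlety — is making the order bound on $G_{i,k}$ airtight and noticing that it is $\mathcal{O}(\Delta)$ rather than $\mathcal{O}(\Delta^2)$; if one only used the coarser bound $|V(G_i)|=\mathcal{O}(\Delta^2)$ and fed that to the subroutine, one would get the far weaker $\mathcal{O}(nc^{\Delta^2}\Delta^{\mathcal{O}(1)})$, which is not the statement. So the proof must run the subroutine on the $G_{i,k}$'s, not on the $G_i$'s, and explicitly invoke $|N_i(x)\cap N_i^2(v_i)|\le |N(x)|\le\Delta$. A minor secondary point is to confirm that the polynomial overhead of building each $G_{i,k}$ and of the bookkeeping is genuinely $\Delta^{\mathcal{O}(1)}$ and therefore does not interfere with the exponential term; this is routine given that every object manipulated for a fixed $(i,k)$ has size $\mathcal{O}(\Delta)$ or $\mathcal{O}(\Delta^2)$.
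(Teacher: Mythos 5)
Your proposal is correct and follows essentially the same route as the paper's proof: both bound the number of graphs $G_{i,k}$ by $\mathcal{O}(n\Delta)$, observe that each has order $\mathcal{O}(\Delta+\Delta)=\mathcal{O}(\Delta)$, and run the assumed subroutine on each to obtain $\mathcal{O}(n\Delta\, c^{\Delta}\Delta^{\mathcal{O}(1)})=\mathcal{O}(nc^{\Delta}\Delta^{\mathcal{O}(1)})$. Your explicit accounting for the construction and bookkeeping phases is slightly more careful than the paper's, but it is the same argument.
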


\begin{proof}

Algorithm~\ref{algo} finds maximum bicliques in the graphs $G_{i,k}$ defined in Theorem~\ref{corre}. There can be at most $\mathcal{O}(n\Delta)$ graphs \(G_{i,k}\), since \(i\) is bounded by \(\mathcal{O}(n)\) and \(k\) by \(\mathcal{O}(\Delta)\). Each of these graphs has a size of \(\mathcal{O}(\Delta + \Delta) = \mathcal{O}(\Delta)\). We then search for a maximum biclique in each of these graphs. Given an algorithm that runs in time \(\mathcal{O}(c^n n^{\mathcal{O}(1)})\) for a graph of order \(n\), we apply it to each of the graphs \(G_{i,k}\) of size \(\mathcal{O}(\Delta)\). Consequently, Algorithm~\ref{algo} operates in time \(\mathcal{O}(n\Delta c^{\Delta} \Delta^{\mathcal{O}(1)}) = \mathcal{O}(nc^{\Delta} \Delta^{\mathcal{O}(1)})\) overall.

\end{proof}

In the next theorem, we describe a simple algorithm for counting all maximum bicliques in a graph. As a corollary, this algorithm can also be used to count all bicliques of a given size $k$.

\begin{theorem}
\label{countt}
Given an algorithm to count all maximum bicliques  of a graph in time $\mathcal{O}(c^n n^{\mathcal{O}(1)})$, there exists an algorithm to count all maximum bicliques  of a graph in time $\mathcal{O}(nc^{\Delta} \Delta^{\mathcal{O}(1)})$.
\end{theorem}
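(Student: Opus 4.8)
The plan is to re-use the subgraph decomposition behind Algorithm~\ref{algo}, but to accumulate a count rather than keep the largest biclique seen; the one genuinely new difficulty is that a maximum biclique of $G$ may be a maximum biclique of several of the small graphs $G_{i,k}$, so one cannot simply add up their counts.

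First I would fix an ordering $\sigma$ of $V(G)$, build the graphs $G_i$ and $G_{i,k}$ as in Section~\ref{scnd}, and compute $\omega$, the size of a maximum biclique of $G$, via Theorem~\ref{algcom} (a counting routine for maximum bicliques reports the optimal value, or one may invoke any detection routine of the assumed form, e.g. Robson's~\cite{robson}). From now on every count concerns bicliques of size exactly $\omega$ inside the small graphs, and these are precisely the maximum bicliques of $G$ that lie inside them. Second, I would upgrade Corollary~\ref{main} to a partition: for a maximum biclique $B=X\cup Y$ of $G$ let $v$ be its lowest-rank vertex, say $v\in X$, put $i=\sigma(v)$, and let $w$ be the lowest-rank vertex of the opposite side $Y$; since each vertex of $Y$ is adjacent to $v$ and follows it in $\sigma$, $w=v_k$ for a uniquely determined $k\in[d(v_i)]$, and, as in the proof of Theorem~\ref{corre}, $V(B)\subseteq V(G_{i,k})$. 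Hence the number of maximum bicliques of $G$ equals $\sum_{i,k}M_{i,k}$, where $M_{i,k}$ is the number of maximum bicliques of $G$ whose canonical pair is $(i,k)$; this sum has no overlaps.

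It remains to evaluate each $M_{i,k}$ with one call of the given counting algorithm on a graph of order $\mathcal{O}(\Delta)$. I would call it not on $G_{i,k}$ but on a modification $\widehat G_{i,k}$ of it, still an induced subgraph of $G$ on $\mathcal{O}(\Delta)$ vertices, built so that the maximum-size bicliques of $\widehat G_{i,k}$ are exactly the maximum bicliques of $G$ with canonical pair $(i,k)$ (and $M_{i,k}=0$ when $\widehat G_{i,k}$ has no biclique of size $\omega$). Concretely, from $G_{i,k}$ one deletes the neighbours of $v_i$ that precede $w$ in $\sigma$ (so that $w$ is indeed the minimum of its side), keeps from the distance-two part only the vertices adjacent to $w$, and, if needed, forces $w$ to be universal and treats $v_i$ by a small auxiliary step; one then checks the correspondence. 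Granting this, the complexity follows: there are $\mathcal{O}(n\Delta)$ pairs $(i,k)$, each $\widehat G_{i,k}$ has $\mathcal{O}(\Delta)$ vertices, each call costs $\mathcal{O}(c^{\Delta}\Delta^{\mathcal{O}(1)})$, and together with the $\mathcal{O}(nc^{\Delta}\Delta^{\mathcal{O}(1)})$ cost of computing $\omega$ the total is $\mathcal{O}(n\Delta\,c^{\Delta}\Delta^{\mathcal{O}(1)})=\mathcal{O}(nc^{\Delta}\Delta^{\mathcal{O}(1)})$.

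The main obstacle will be the correctness of $\widehat G_{i,k}$: one must rule out under-counting (no vertex of a canonical-$(i,k)$ biclique may be deleted) and over-counting (every maximum-size biclique of $\widehat G_{i,k}$ must contain $v_i$ and $w$, on the correct sides, with $w$ minimal on its side), and the second point is delicate precisely because the bicliques are non-induced, so a side may contain neighbours of $v_i$ and the bipartition need not be dictated by adjacency to $v_i$. If one is instead handed an enumeration subroutine of the same complexity, the obstacle disappears: enumerate the maximum-size bicliques of each $G_{i,k}$ and, for each, compute its canonical pair in $\mathcal{O}(\Delta)$ time, incrementing the counter only when it equals $(i,k)$. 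Finally, the corollary for bicliques of a prescribed size $k$ follows by the identical argument with $k$ in place of $\omega$ throughout and a subroutine counting bicliques of size $k$ in the $\mathcal{O}(\Delta)$-vertex graphs, as furnished by the cited results.
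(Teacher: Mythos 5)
Your proposal follows the paper's decomposition (the graphs $G_{i,k}$ of order $\mathcal{O}(\Delta)$, one counting call per graph), and you have correctly identified the real obstruction: a single maximum biclique $B=X\cup Y$ with lowest-rank vertex $v_i\in X$ lies in $G_{i,k}$ for \emph{every} $k$ with $v_k\in Y$, so summing the counts over-counts each biclique by a factor depending on $|Y|$. You should know that the paper's own proof does not resolve this either: it explicitly concedes that ``this algorithm does not give the exact number of maximum bicliques'' and then stops mid-argument. So the theorem, as proved in the paper, is not actually established, and your instinct that a canonical-pair disambiguation is needed is exactly the missing ingredient.

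That said, your patch is not yet a proof, and the gap sits precisely where you flag it. Deleting from $G_{i,k}$ the neighbours of $v_i$ preceding $w=v_k$ guarantees no under-counting, but nothing in the resulting graph $\widehat G_{i,k}$ forces a size-$\omega$ biclique to contain $v_i$ at all (Lemma~\ref{vert} is itself only justified for bicliques split across $N_i(v_i)$ and $N_i^2(v_i)$), nor to contain $w$, nor to have $w$ as the minimum of the side opposite $v_i$: a biclique whose opposite side starts at some $w'$ with $\sigma(w')>\sigma(w)$ survives in $\widehat G_{i,k}$ and is counted again in $\widehat G_{i,k'}$, so $\sum_{i,k}M_{i,k}$ still over-counts. ``Forcing $w$ to be universal'' alters the biclique structure of the graph and cannot be handed to a black-box counting oracle without further argument; an inclusion--exclusion over the deleted prefix, or a reduction that genuinely pins $v_i$ and $w$ to prescribed sides, would be needed and is not supplied. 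Your enumeration-based fallback is correct and clean, but it proves a different statement: it assumes an \emph{enumeration} subroutine of the stated complexity, whereas Theorem~\ref{countt} hypothesizes only a counting subroutine (and the intended instantiation, the $\mathcal{O}(1.2377^n)$ counter, does not enumerate). The same caveats carry over to your treatment of Corollary~\ref{countcor}.
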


\begin{proof}

We begin by outlining a straightforward algorithm that counts all maximum bicliques in time \(\mathcal{O}(n c^{\Delta^2} \Delta^{\mathcal{O}(1)})\), under the assumption that we have an existing algorithm for counting all maximum bicliques in time \(\mathcal{O}(c^n n^{\mathcal{O}(1)})\). Next, we demonstrate how to achieve the claimed complexity. 
To start, by applying Corollary~\ref{main}, counting the maximum bicliques in each graph \(G_i\) for \(i \in [n]\) will yield the total number of maximum bicliques in the overall graph. Each of these graphs has a size of at most \(\mathcal{O}(\Delta^2)\), and there are at most \(n\) such graphs. Consequently, the counting process across these graphs can be completed within the stated total time.

To achieve the complexity outlined in Theorem~\ref{countt}, we follow a process similar to that in Algorithm~\ref{algo}. First, we compute all $\mathcal{O}(n\Delta)$ graphs $G_{i,k}$ where~$i \in [n]$ and $k \in [d(v_i)]$. Next, we count the number of maximum bicliques in each of these graphs. Given an algorithm that counts maximum bicliques in a graph of order $n$ in time $\mathcal{O}(c^n n^{\mathcal{O}(1)})$, the overall complexity of this process is $\mathcal{O}(n c^{\Delta} \Delta^{\mathcal{O}(1)})$. Note that this algorithm does not give the exact number of maximum bicliques. To see this, consider some maximum biclique $B$. Let $V(B)=X\cup Y$ with $X$ and $Y$ being the two independent sets of vertices forming biclique $B$.

\end{proof}

\begin{corollary}
\label{countcor}
    Given an algorithm to count all  bicliques size $k$ of a graph in time $\mathcal{O}(c^n n^{\mathcal{O}(1)})$, there exists an algorithm to count all bicliques of size $k$ of a graph in time $\mathcal{O}(nc^{\Delta} \Delta^{\mathcal{O}(1)})$.
\end{corollary}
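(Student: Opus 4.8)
The plan is to reuse the template of the proof of Theorem~\ref{countt} and of Algorithm~\ref{algo}, but to replace the role of Corollary~\ref{main} (which is tailored to \emph{maximum} bicliques via the maximality argument of Lemma~\ref{atm1}) by a localization valid for \emph{every} biclique of size $k$. The key observation is that the proof of Lemma~\ref{atl1} never uses maximality: if $B=X\cup Y$ is any biclique and $v$ is its lowest-ranked vertex, say $v\in X$, then $Y\subseteq N_{\sigma(v)}(v)$ and $X\setminus\{v\}\subseteq N^{2}_{\sigma(v)}(v)$, so $V(B)\subseteq V(G_{\sigma(v)})$. Conversely, by Definition~\ref{graph} every vertex of $G_i$ other than $v_i$ has rank strictly larger than $i$, so any biclique of $G$ lying inside $G_i$ and containing $v_i$ has $v_i$ as its lowest-ranked vertex. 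Hence the bicliques of size $k$ of $G$ are in bijection with the disjoint union over $i\in[n]$ of the size-$k$ bicliques of $G_i$ that contain $v_i$. This already gives the weaker bound $\mathcal{O}(nc^{\Delta^2}\Delta^{\mathcal{O}(1)})$: each $G_i$ has $\mathcal{O}(\Delta^2)$ vertices, and the number of size-$k$ bicliques of $G_i$ containing $v_i$ equals (number in $G_i$) $-$ (number in $G_i-v_i$), i.e.\ two calls to the assumed black box.

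To bring the exponent down from $\Delta^2$ to $\Delta$, I would decompose each $G_i$ into the $\mathcal{O}(\Delta)$ induced subgraphs $G_{i,k'}$, $k'\in[d(v_i)]$, exactly as in Algorithm~\ref{algo}: for a biclique $B=X\cup Y$ of $G_i$ with $v_i\in X$ and $Y\neq\emptyset$, picking $v\in Y$ and setting $x=v$ makes $B$ an induced subgraph of the corresponding $G_{i,k'}$, since $Y\subseteq N_i(v_i)$ and $X\setminus\{v_i\}\subseteq N_i^2(v_i)\cap N_i(v)$; and $G_{i,k'}$ has only $\mathcal{O}(\Delta)$ vertices. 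Summing the (corrected) counts over the $\mathcal{O}(n\Delta)$ graphs $G_{i,k'}$, with a constant number of black-box calls per graph each costing $\mathcal{O}(c^{\Delta}\Delta^{\mathcal{O}(1)})$, yields the claimed $\mathcal{O}(n\Delta\cdot c^{\Delta}\Delta^{\mathcal{O}(1)})=\mathcal{O}(nc^{\Delta}\Delta^{\mathcal{O}(1)})$ running time.

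The step I expect to be the main obstacle is exactly the one flagged, but not resolved, at the end of the proof of Theorem~\ref{countt}: avoiding over-counting. A fixed biclique $B$ with lowest-ranked vertex $v_i$ appears in $G_{i,k'}$ for \emph{every} vertex $x$ of $B$ on the side opposite $v_i$, so a naive summation counts $B$ once per such vertex. The fix I would implement is to charge $B$ to the unique pair $(i,k')$ for which $v_i$ is the lowest-ranked vertex of $B$ and $x$ is the lowest-ranked vertex of $B$ inside $N_i(v_i)$, and then, inside $G_{i,k'}$, to count only size-$k$ bicliques containing both $v_i$ and $x$ and no vertex of $N_i(v_i)$ of rank below $\sigma(x)$. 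With only a black-box counter available, this is obtained by first deleting from $G_{i,k'}$ all vertices of $N_i(v_i)$ of rank $<\sigma(x)$, and then applying inclusion--exclusion over the two distinguished vertices $\{v_i,x\}$ (the four graphs $G_{i,k'}$, $G_{i,k'}-v_i$, $G_{i,k'}-x$, $G_{i,k'}-v_i-x$). The point requiring real care is the interaction with whether bicliques are taken induced or non-induced: one must check that in the counted bicliques $x$ necessarily lands on the side opposite $v_i$ (automatic for induced bicliques, since $x\in N(v_i)$), and that deleting forbidden vertices never destroys a biclique we wish to retain. Verifying that this bookkeeping charges every size-$k$ biclique of $G$ to exactly one subgraph, uniformly over the degenerate cases (one-sided bicliques, small $k$), is where the bulk of the work lies.
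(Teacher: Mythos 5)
Your proposal is correct in substance and is in fact considerably more complete than the paper's own argument, which consists of a single sentence deferring to the proof of Theorem~\ref{countt} and replacing ``maximum bicliques'' by ``bicliques of size $k-1$'' in the graphs $G_{i,k}$. Two of your ingredients are genuinely absent from the paper. First, the paper's localization tool, Corollary~\ref{main}, is stated and proved only for maximum bicliques (Lemma~\ref{atm1} invokes maximality via Lemma~\ref{vert}), so it cannot be cited as-is for arbitrary size-$k$ bicliques; your replacement --- every biclique lies in $G_i$ for $i$ the rank of its lowest vertex, with uniqueness recovered by restricting to bicliques of $G_i$ that \emph{contain} $v_i$ rather than by a maximality argument --- is exactly what the corollary actually needs, and your observation that the proof of Lemma~\ref{atl1} never uses maximality is the right justification. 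Second, the proof of Theorem~\ref{countt} ends by conceding that summing over the graphs $G_{i,k}$ ``does not give the exact number'' and breaks off without repairing the over-count; the corollary's proof inherits this unresolved gap. Your charging scheme (anchor each biclique at its lowest vertex $v_i$ and at the lowest-ranked vertex $x$ of the side opposite $v_i$, delete from $G_{i,k'}$ the vertices of $N_i(v_i)$ of rank below $\sigma(x)$, and isolate the bicliques through both anchors by inclusion--exclusion over four black-box calls) is a valid repair, costs only a constant factor per subgraph, and preserves the claimed $\mathcal{O}(nc^{\Delta}\Delta^{\mathcal{O}(1)})$ bound. The caveats you flag (induced versus non-induced, one-sided or very small bicliques) are real but routine in the induced setting, which is the one the section implicitly works in (the proof of Lemma~\ref{atl1} already assumes each side is independent). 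In short: you follow the paper's skeleton but supply the two correctness arguments the paper omits, and you dispense with the unexplained ``size $k-1$'' device entirely.
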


\begin{proof}

The proof is similar to that of Theorem~\ref{countt}, but instead of counting all maximum bicliques in the graphs \(G_{i,k}\) where \(i \in [n]\) and \(k \in [d(v_i)]\), we count all bicliques of size \(k-1\).

\end{proof}
\bibliographystyle{elsarticle-num}
\bibliography{biblio}

\begin{thebibliography}{10}
\expandafter\ifx\csname url\endcsname\relax
  \def\url#1{\texttt{#1}}\fi
\expandafter\ifx\csname urlprefix\endcsname\relax\def\urlprefix{URL }\fi
\expandafter\ifx\csname href\endcsname\relax
  \def\href#1#2{#2} \def\path#1{#1}\fi

\bibitem{madeira}
S.~C. Madeira, A.~L. Oliveira, Biclustering algorithms for biological data analysis: a survey, IEEE/ACM transactions on computational biology and bioinformatics 1~(1) (2004) 24--45.

\bibitem{pontes}
B.~Pontes, R.~Giráldez, J.~S. Aguilar-Ruiz, \href{https://www.sciencedirect.com/science/article/pii/S1532046415001380}{Biclustering on expression data: A review}, Journal of Biomedical Informatics 57 (2015) 163--180.
\newblock \href {https://doi.org/https://doi.org/10.1016/j.jbi.2015.06.028} {\path{doi:https://doi.org/10.1016/j.jbi.2015.06.028}}.
\newline\urlprefix\url{https://www.sciencedirect.com/science/article/pii/S1532046415001380}

\bibitem{tanti}
C.~Tantipathananandh, T.~Berger-Wolf, D.~Kempe, A framework for community identification in dynamic social networks, in: Proceedings of the 13th ACM SIGKDD international conference on Knowledge discovery and data mining, 2007, pp. 717--726.

\bibitem{7}
D.~Eppstein, \href{http://www.sciencedirect.com/science/article/pii/002001909490121X}{Arboricity and bipartite subgraph listing algorithms}, Information Processing Letters 51~(4) (1994) 207 -- 211.
\newblock \href {https://doi.org/https://doi.org/10.1016/0020-0190(94)90121-X} {\path{doi:https://doi.org/10.1016/0020-0190(94)90121-X}}.
\newline\urlprefix\url{http://www.sciencedirect.com/science/article/pii/002001909490121X}

\bibitem{8}
G.~Alexe, S.~Alexe, Y.~Crama, S.~Foldes, P.~L. Hammer, B.~Simeone, Consensus algorithms for the generation of all maximal bicliques, Discrete Applied Mathematics 145~(1) (2004) 11--21.

\bibitem{14}
V.~M. Dias, C.~M. de~Figueiredo, J.~L. Szwarcfiter, \href{http://www.sciencedirect.com/science/article/pii/S0166218X07000649}{On the generation of bicliques of a graph}, Discrete Applied Mathematics 155~(14) (2007) 1826 -- 1832, 3rd Cologne/ Twente Workshop on Graphs and Combinatorial Optimization.
\newblock \href {https://doi.org/https://doi.org/10.1016/j.dam.2007.03.017} {\path{doi:https://doi.org/10.1016/j.dam.2007.03.017}}.
\newline\urlprefix\url{http://www.sciencedirect.com/science/article/pii/S0166218X07000649}

\bibitem{26}
K.~Makino, T.~Uno, New algorithms for enumerating all maximal cliques, in: Scandinavian Workshop on Algorithm Theory, Springer, 2004, pp. 260--272.

\bibitem{16}
P.~Damaschke, \href{http://www.sciencedirect.com/science/article/pii/S0020019014000192}{Enumerating maximal bicliques in bipartite graphs with favorable degree sequences}, Information Processing Letters 114~(6) (2014) 317 -- 321.
\newblock \href {https://doi.org/https://doi.org/10.1016/j.ipl.2014.02.001} {\path{doi:https://doi.org/10.1016/j.ipl.2014.02.001}}.
\newline\urlprefix\url{http://www.sciencedirect.com/science/article/pii/S0020019014000192}

\bibitem{gaspers}
S.~Gaspers, D.~Kratsch, M.~Liedloff, On independent sets and bicliques in graphs, Algorithmica 62 (2012) 637--658.

\bibitem{robson}
J.~Robson, \href{https://www.sciencedirect.com/science/article/pii/0196677486900325}{Algorithms for maximum independent sets}, Journal of Algorithms 7~(3) (1986) 425--440.
\newblock \href {https://doi.org/https://doi.org/10.1016/0196-6774(86)90032-5} {\path{doi:https://doi.org/10.1016/0196-6774(86)90032-5}}.
\newline\urlprefix\url{https://www.sciencedirect.com/science/article/pii/0196677486900325}

\bibitem{sat}
M.~Wahlstr{\"o}m, A tighter bound for counting max-weight solutions to 2sat instances, in: M.~Grohe, R.~Niedermeier (Eds.), Parameterized and Exact Computation, Springer Berlin Heidelberg, Berlin, Heidelberg, 2008, pp. 202--213.

\end{thebibliography}

\end{document}